\newcommand{\eins}{\ensuremath{\mathbbm 1}}
\DeclareMathOperator{\Tr}{Tr}
\newcommand{\R}{\ensuremath{\mathbb R}\xspace}
\newcommand{\C}{\ensuremath{\mathbb C}\xspace}
\newcommand{\csum}{\sideset{^c}{}\sum}
\theoremstyle{plain} 
  \newtheorem{thm}{Theorem}[section]
  \newtheorem{lem}[thm]{Lemma}
\theoremstyle{definition} 
  \newtheorem{defn}{{Definition}}[section]
  \newtheorem{rem}{{Remark}}[section]
  \newtheorem{exmpl}{{Example}}[section]
\begin{document}

\title{Entanglement and output entropy of the diagonal map}

\author{Meik Hellmund}
\affiliation{Mathematisches Institut, Universit{\"a}t Leipzig,
Johannisgasse 26, D-04103 Leipzig, Germany}
\email{Meik.Hellmund@math.uni-leipzig.de}

\begin{abstract}
We review some properties of the convex roof extension,
a construction used, e.g., in the definition of the entanglement of formation.
Especially we  consider 
 the use of symmetries of channels and 
states  for the construction of the convex roof.
As an application 
we study the entanglement entropy of the diagonal map for permutation
symmetric real $N=3$ states $\omega(z)$ and solve the case $z<0$ where $z$
is the non-diagonal entry in the density matrix.  We also report a
surprising result about the behaviour of the output entropy of the diagonal
map for arbitrary dimensions $N$; showing a bifurcation at $N=6$.   
\end{abstract}

\pacs{03.67.-a,   
03.67.Mn          
}
\maketitle

\section{Introduction}
\label{intro}

Let  $\Phi\colon \omega\mapsto \omega'$ 
be a quantum channel or, somewhat more general, a trace-preserving 
positive map of (mixed) states $\omega\in \Omega$  from one quantum system
$\Omega$ to states $\omega'\in \Omega'$ from another system.  
We call
\begin{equation}
  \label{eq:1}
  E_\Phi(\omega) = \min_{\omega=\sum p_j \pi_j}\; \csum_j p_j\; S(\Phi(\pi_j))  
\end{equation}
{\it entanglement entropy} of the channel $\Phi$ or {\it $\Phi$-entanglement} for
short. Here the 
minimum is taken over all possible convex  
decompositions of the input state $\omega$ into pure states
\begin{equation}
  \label{eq:2}
  \omega =
\csum p_j \, \pi_j, \quad \pi_j \, \hbox{ pure, i.e., }
\pi_j = | \psi_j \rangle \langle \psi_j |
\end{equation}
and $S(\omega) =-\Tr \omega \log \omega$  is the von
Neumann entropy of the output states. We use the symbol { $ ^c\sum$} to
denote a convex sum, i.e., it implies  $p_j>0$ and $\sum p_j=1$.

The quantity \eqref{eq:1} appears in different places in quantum information
theory. For example,
\begin{enumerate}
\item The celebrated 
{\it entanglement of formation}\cite{BenFucSmo96} of a bipartite
  quantum system is the $\Phi$-entanglement of the partial trace
  $\Phi=\Tr_A$ with respect to one of the subsystems of the bipartite system.
\item 
The  theorem 
of Holevo, Schumacher, and Westmoreland\cite{schumacher97,holevo98}
 shows that  the one-shot or product state classical capacity
$\chi(\Phi)$ of a channel $\Phi$ can be obtained 
 by maximising the difference between 
output entropy and entanglement entropy (the so-called Holevo quantity) 
 over
all input density operators:
\begin{eqnarray}
  \label{eq:3}
\chi_\Phi^*(\omega) &=& S( \Phi( \omega )) - E_{\Phi}(\omega) \\ \nonumber
\chi_{\Phi} &=&  \max_{\omega} \;\chi_\Phi^*(\omega)
\end{eqnarray}
\item In \cite{BNU96} 
the optimization problem Eq.~\eqref{eq:1} was considered in
  connection with the quantum dynamical entropy of
  Connes-Narnhofer-Thirring\cite{CNT87}. In this framework
one considers a subalgebra $\mathcal{B}\subset \mathcal{A}$ of 
the algebra $\mathcal{A}$ 
of observables. The restriction of states to this subalgebra gives rise to a
channel $\Phi_{\mathcal{B},\mathcal{A}}$, 
the 
  dual of the inclusion map $\mathcal{B} \hookrightarrow \mathcal{A}$. 
  The difference $S(\Phi(\omega))- E_\Phi(\omega)$   is called  
{\it entropy of the subalgebra}; see also \cite{Benattibuch} for a   thorough
presentation.
\end{enumerate}
Closed formulas for the entanglement entropy, i.e.,
analytic solutions to the global optimization problem Eq.~\eqref{eq:1}
are very rare. 
They include  certain classes of highly symmetric states
\cite{TV00,vollbrecht-2000,caves08} and the celebrated entanglement of formation of a 
pair of qubits\cite{Woo97}.

Even earlier, Benatti, Narnhofer and Uhlmann 
\cite{BNU96,BUN99,BNU03} studied    the entanglement entropy of the diagonal map 
of a $3$-dimensional quantum system as an example for the entropy of a
subalgebra. 
The diagonal map (also called pinching channel) 
$\Phi_D$  sets all non-diagonal
elements of the input state $\omega$ to zero and corresponds to the choice of
a maximal abelian subalgebra $\mathcal{B}\subset \mathcal{A}$. 
Using a mixture of analytical and numerical methods, they found  explicit
results  for the entanglement entropy $E_{\Phi_D}$ (called $E_D$ in what
follows) of the
  diagonal map $\Phi_D$ 
applied to the 
one-dimensional family of  permutation symmetric $N=3$  
real input states 
\begin{equation}
  \label{eq:14}
  \omega(z) =\frac{1}{3}
  \begin{pmatrix}
    1 & z &z \\ z&1&z\\ z&z&1
  \end{pmatrix}.
\end{equation}

In this paper we present some remarks about the role of symmetries in the 
optimization problem \eqref{eq:1} based on the observations in \cite{TV00,vollbrecht-2000}. 
Using those insights we provide new results for the 
 entanglement entropy $E_D(z)$ of states of the
form \eqref{eq:14} for the case of negative values of the parameter $z$.

We also present a result about the output entropy of the diagonal map 
in arbitrary dimensions.

\section{Convex hulls and roof extensions}
\label{hull}

The state space $\Omega$  of a quantum mechanical system with an
$N$-dimensional Hilbert space $\mathcal{H}$ is a compact convex space of $N^2-1$ real
dimensions. 

A {\it (proper) face} $F$ of $\Omega$ is a non-empty subset $F\subsetneq
\Omega$ which is closed 
under convex compositions and 
decompositions, i.e., whenever $\omega = {}^c\sum_i \;p_i \omega_i$ and $\omega\in F$, then
$\omega_i\in F$. The (non-disjoint) union 
of all  faces $\bigcup F_i =\partial
\Omega$
constitutes the boundary 
of $\Omega$. There is a one-to-one correspondence between the faces of
$\Omega$ and linear
subspaces of  $\mathcal{H}$ with an $K^2-1$-dimensional face for every
$K$-dimensional subspace. The face consists of all the states $\omega$
with support in the corresponding subspace.    
Zero-dimensional faces correspond to pure states and
constitute the {\it extreme boundary} $\partial_e \Omega\subseteq \partial
\Omega$.

Let $f(\omega)$ be a real-valued function on $\Omega$. The convex hull
$f^\Cup$ of $f$ is 
the largest convex function not larger than $f$, i.e., 
for which $f^\Cup(\omega) \le f(\omega)\; \forall \omega\in \Omega$.   
The convex hull of a function is the solution of the global
  optimization problem 
  \begin{equation}
    \label{eq:28}
    f^\Cup(\omega)= \min_{{}^c\sum p_i \omega_i=\omega} \csum p_i f(\omega_i) 
  \end{equation}
where the minimum is taken over all convex decompositions of $\omega$. 
Carath\'{e}odory's
theorem asserts that we can restrict the search for optimal decompositions
 to decompositions of length up to
$l_{\max}=\dim\Omega+1$.

Let us now consider the case where  the function
$f$ is concave, such as the von Neumann entropy $S(\omega)$. 
 Obviously  we can then restrict the search for an
optimal decomposition to the extremal boundary, $\omega_i\in\partial_e\Omega$.   
It follows that the convex hull
 $f^\Cup$ depends only on the values of $f$ on $\partial_e\Omega$ and not on
 the behaviour of $f$ inside $\Omega$, as long as $f$ is everywhere concave.

Therefore we can consider an extension problem which ist closely related to
the global optimization problem \eqref{eq:28}:
Given a function $g(\pi)$ on $\partial_e\Omega$, i.e., on
the set of pure states, we ask  
for a canonical extension $g^\cup$ of $g$ to all of $\Omega$
defined as 
\begin{equation}
  \label{eq:18}
  g^\cup(\omega) = \min_{\omega=\sum p_j \pi_j}\; \csum_j p_j\; g(\pi_j)
\end{equation}.
This extension was called {\it convex roof extension} and intensively
studied in, e.g., \cite{Uh98,AUroof}. It is, in a sense, the extension which
is as linear as
possible while being everywhere convex.
\begin{defn}[roof extension]
  \label{thr:2}
  A function $G(\omega)$ is called a  {\it roof extension} of $g(\pi)$ if
  for every $\omega\in \Omega$ there is at least one extremal convex
  decomposition

  \begin{align}
    \label{eq:20}
 \omega&=\csum p_j \pi_j,\qquad \pi_j\in \partial_e \Omega \\
 \label{eq:21}
 \text{such that }\quad 
     G(\omega) &= \csum p_j g(\pi_j).     
  \end{align}
If this is the case, we call the decomposition \eqref{eq:20} {\it optimal}
with respect to $g$ or $g$-optimal.  
\end{defn}

\begin{figure}[h!tb]
  \centering
      \includegraphics[scale=1]{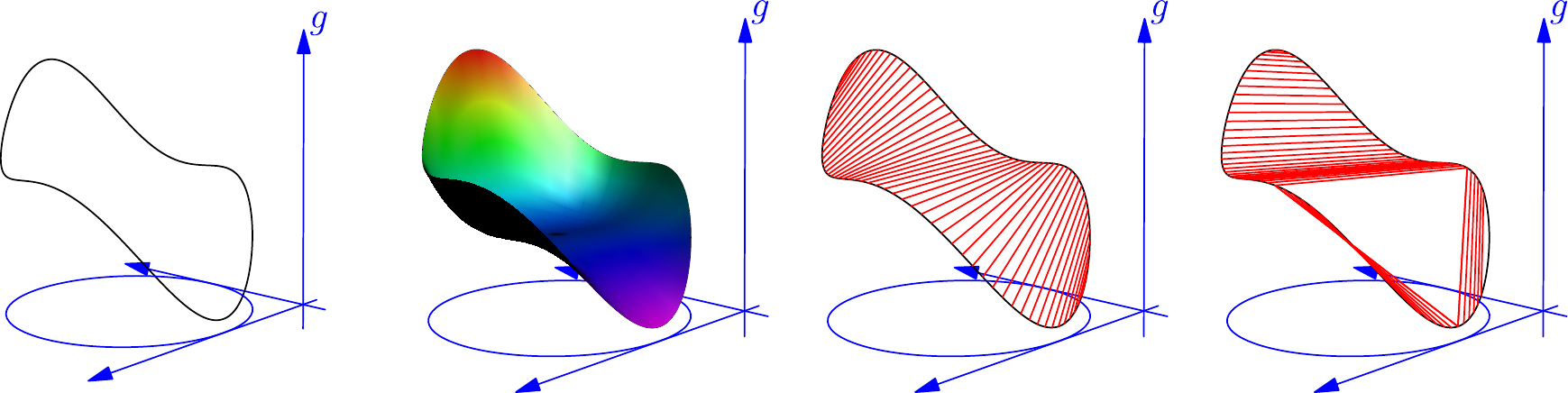}\\[.3cm]
 \hspace*{.9cm}   $g(\pi)$ \hspace*{2.cm} convex extension\hspace*{.8cm}
$\le$ \hspace*{1.1cm} roof extension  \hspace*{1.8cm}
\begin{minipage}[t]{5cm}
largest convex ext. = smallest roof ext. = convex roof   
\end{minipage}
  \caption{An illustration of the convex roof extension where $\Omega$ is a
    disc  and $\partial_e\Omega$  a circle. }
  \label{fig:roof}
\end{figure}

Fig.~\ref{fig:roof} may illustrate the concept and explain the name. In a
roof extension, the ground floor $\Omega$ is
covered by straight roof beams and plane tiles. Those beams and tiles rest with  their ends 
on  the the wall erected by $g(\pi)$. It is immediately clear from the
definition of convexity that every convex extension is pointwise majorized
by every roof extension. But is the largest convex extension a roof? 
The following theorem asserts that this is true at least when $g$ is
continuous and $\Omega$ compact:

\begin{thm}[\cite{BNU96,Uh98}]
  \label{thmRoof}
Let $g(\pi)$ be a continuous real-valued function on
the set of pure states $\partial_e\Omega$. 
There exists exactly one  function $g^\cup(\omega)$ on $\Omega$ which can be
characterized uniquely by each one of the  following four properties:
\begin{enumerate}
\item $g^\cup$ is the unique convex roof extension of $g$.
\item $g^\cup ( \omega )$ is the solution of the optimization problem
  \begin{equation}
    \label{eq:55}
    g^\cup ( \omega ) = \inf_{\omega = {}^c\sum p_j \, \pi_j }\; \csum p_j \, g( \pi_j ).
  \end{equation}
\item $g^\cup( \omega )$ is largest convex extension  as well as 
the smallest roof extension of $g$.
  \end{enumerate}
Furthermore, given $\omega \in \Omega$, the function $g^\cup$ is convex-linear
 on the convex hull of all
pure states $\pi$ appearing in optimal decompositions of $\omega$.

Therefore, $g^\cup$ provides a foliation of $\Omega$ into compact
leaves such that   a) each leaf is the convex hull of
some pure states and b) $g^\cup$ is convex-linear on each leaf.

\end{thm}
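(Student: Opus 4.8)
The plan is to take property~2 as the working definition, set $g^\cup(\omega)=\inf_{\omega={}^c\sum p_j\pi_j}\csum_j p_j\,g(\pi_j)$, and then verify in turn that this single function realizes every listed characterization. First I would show that the infimum is actually attained, so that $g^\cup$ is automatically a roof. By Carath\'eodory's theorem it suffices to consider decompositions of length at most $l_{\max}=\dim\Omega+1$; the set of tuples $(p_1,\dots,p_l,\pi_1,\dots,\pi_l)$ subject to the closed constraints $p_j\ge 0$, $\sum_j p_j=1$, $\pi_j\in\partial_e\Omega$ and $\sum_j p_j\pi_j=\omega$ is compact, where I use that $\partial_e\Omega$ is compact for the state space of a finite-dimensional Hilbert space. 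Since $g$ is continuous the objective is continuous on this compact set, so the infimum is a minimum and an optimal decomposition exists for every $\omega$; this furnishes the roof.

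Next I would establish convexity and the extension property. Convexity follows by the usual concatenation argument: given $\omega=\lambda\omega_1+(1-\lambda)\omega_2$ and near-optimal decompositions of $\omega_1,\omega_2$, weighting them by $\lambda$ and $1-\lambda$ gives an admissible decomposition of $\omega$, whence $g^\cup(\omega)\le\lambda g^\cup(\omega_1)+(1-\lambda)g^\cup(\omega_2)$. That $g^\cup=g$ on $\partial_e\Omega$ is immediate, since an extreme point admits only the trivial decomposition. Combining these, for any convex $h$ with $h\le g$ on $\partial_e\Omega$ and any decomposition one has $h(\omega)\le\sum_j p_j h(\pi_j)\le\sum_j p_j g(\pi_j)$, so $h\le g^\cup$; thus $g^\cup$ is the largest convex extension. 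Dually, the remark preceding the theorem shows every convex extension is majorized by every roof extension, so $g^\cup$, being a convex extension, lies below every roof, and being itself a roof it is the smallest one. Uniqueness in all three formulations then follows formally: any convex roof must equal the largest convex extension (being convex) and the smallest roof (being a roof), hence equals $g^\cup$.

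The substantive part, and the step I expect to be the main obstacle, is the foliation statement. Fix $\omega$ with an optimal decomposition $\omega=\csum_j p_j\pi_j$ and let $\Delta=\co\{\pi_j\}$. On $\Delta$ the convex function $g^\cup$ lies below the affine function $a$ interpolating the vertex values $g(\pi_j)$, so $d:=a-g^\cup$ is concave, nonnegative, and vanishes at every vertex; by optimality it also vanishes at $\omega$, which lies in the relative interior of $\Delta$ because all $p_j>0$. A nonnegative concave function vanishing at a relative-interior point vanishes identically: for any $x\in\Delta$ the segment through $\omega$ and $x$ meets the relative boundary at some $y$ with $\omega=\mu x+(1-\mu)y$, $\mu\in(0,1)$, and concavity forces $0=d(\omega)\ge\mu d(x)+(1-\mu)d(y)\ge 0$, so $d(x)=0$. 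Hence $g^\cup=a$ is affine on $\Delta$, and every $\omega'=\csum_j q_j\pi_j\in\Delta$ inherits its vertex decomposition as an optimal one.

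Finally I would upgrade this from a single optimal decomposition to the full leaf through $\omega$, namely the convex hull of all pure states occurring in any optimal decomposition of $\omega$. The delicate point is that these pure states may be affinely dependent, so the interpolating affine function need not be unique; here I would argue that the value $\csum_j q_j g(\pi_j)$ is in fact independent of the decomposition of a given $\omega'$ inside the leaf, since any two optimal decompositions can be compared through the affineness just established, so that $g^\cup$ extends to a single affine function on the entire convex hull. Ranging over $\omega$ then produces the claimed foliation into compact leaves, each a convex hull of pure states, on which $g^\cup$ is convex-linear. I expect the careful treatment of affine dependence and the well-definedness of the leaf to absorb most of the real work, the remaining assertions being soft consequences of compactness and convexity.
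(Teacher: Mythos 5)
The paper does not actually prove Theorem~\ref{thmRoof}; it is quoted from \cite{BNU96,Uh98}, so there is no in-paper argument to compare against. Your proposal follows the standard route of those references: define $g^\cup$ by the optimization problem, obtain attainment of the infimum from Carath\'eodory's theorem plus compactness of $\partial_e\Omega$ and continuity of $g$, get convexity by concatenating decompositions, identify $g^\cup$ with $g$ on the extreme points, and derive both extremality characterizations (largest convex extension, smallest roof) from the elementary inequality that every convex extension is majorized by every roof extension. All of that part is correct.

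The one soft spot is the convex-linearity/foliation step. You introduce ``the affine function $a$ interpolating the vertex values $g(\pi_j)$'' on $\Delta=\co\{\pi_j\}$; when the $\pi_j$ are affinely dependent such an $a$ need not exist at all (the prescribed vertex values may be inconsistent with any affine function), and your proposed repair --- that well-definedness of $\sum_j q_j g(\pi_j)$ follows ``through the affineness just established'' --- is circular, since that affineness is precisely what is being proved. The standard way to close this avoids interpolation entirely: for an optimal decomposition $\omega=\sum_j p_j\pi_j$ with all $p_j>0$ and an arbitrary $\omega'=\sum_j q_j\pi_j\in\Delta$, write $\omega=t\omega'+(1-t)\omega''$ with $\omega''=\sum_j\frac{p_j-tq_j}{1-t}\,\pi_j$ for $t>0$ small enough that all coefficients stay nonnegative; then
\[
 g^\cup(\omega)\;\le\; t\,g^\cup(\omega')+(1-t)\,g^\cup(\omega'')\;\le\; t\sum_j q_j\,g(\pi_j)+\sum_j\bigl(p_j-tq_j\bigr)g(\pi_j)\;=\;\sum_j p_j\,g(\pi_j)\;=\;g^\cup(\omega),
\]
which forces $g^\cup(\omega')=\sum_j q_j g(\pi_j)$ irrespective of affine dependence, and in particular shows the value is automatically independent of the chosen representation of $\omega'$. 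To pass from one optimal decomposition to the full leaf (the convex hull of all pure states occurring in any optimal decomposition of $\omega$) you should also state explicitly that finitely many optimal decompositions of $\omega$ can be merged into a single one by averaging them, after which the same squeeze applies to any point of the leaf via Carath\'eodory; this is the step your sketch gestures at but does not supply.
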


\begin{rem}
  \begin{enumerate}
  \item The theorem justifies to write ``min'' instead of ``inf'' in
Eqs.~\eqref{eq:1} and \eqref{eq:55}. 
\item If $g^\cup$ is not only linear but even constant on each leaf, it is called a 
flat roof.
\item 
Let $f(\omega)$ be a concave function on $\Omega$, e.g., $f(\omega) =
S(\Phi(\omega)) $.  
Then we denote by $f^\cup$ 
the convex roof extension of  $f\vert_{\partial_e\Omega}$.  
  \end{enumerate}
\end{rem}

\section{Symmetries and invariant states}
\label{symm}

The following lemma gives a simple bound for  $f^\cup$. 
Let $P: \Omega\rightarrow \Xi$ be an affine and
 surjective  map  $\omega  \mapsto \xi$. 
The space $\Xi$, as the image of $\Omega$ under an affine map, is convex and
compact,
but it need not be a quantum state space.
The map $P$ provides a
foliation of $\Omega$ into
leaves $\Omega = \bigcup_{\xi\in\Xi} L_\xi $ via 
\begin{equation}
  \label{eq:36}
L_\xi=\{\omega\, \vert\, \omega\in \Omega,  P(\omega)=\xi\}.  
\end{equation}
Since $P$ is affine, every leaf is generated by cutting $\Omega$ with some 
hyperplane 
and therefore, the leaves are convex, too. 
We define the function $\epsilon(\xi)$ on $\Xi$ as the minimum of $f$ on
the corresponding leaf 
\begin{equation}
  \label{eq:35}
  \epsilon(\xi) = \min_{L_\xi} f(\omega) =\min \{f(\omega)\,\vert\,P\omega=\xi\}.  
\end{equation}
\begin{lem}\label{lem:41}
  The convex hull  of the $\epsilon$ function, 
$ \epsilon^\Cup (\xi), $ provides a lower bound for the convex
  roof $f^\cup(\omega)$, i.e., 
  \begin{equation}
    \label{eq:37}
 f^\cup(\omega) \ge  \epsilon^\Cup(P\omega)
  \end{equation}
\end{lem}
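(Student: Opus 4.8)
The plan is to transport an optimal convex-roof decomposition of $\omega$ through the affine map $P$ and then to chain three elementary inequalities. First I would invoke Theorem~\ref{thmRoof}, which (since $g=f|_{\partial_e\Omega}$ is continuous and $\Omega$ is compact) guarantees that the infimum in \eqref{eq:55} is attained. I fix an optimal decomposition $\omega=\csum_j p_j\,\pi_j$ into pure states $\pi_j\in\partial_e\Omega$ with $f^\cup(\omega)=\csum_j p_j\,f(\pi_j)$, and set $\xi_j:=P\pi_j\in\Xi$. Because $P$ is affine, applying it to the decomposition gives $P\omega=\csum_j p_j\,\xi_j$, so the pushed-forward weights $\{p_j\}$ realize $P\omega$ as a convex combination of the points $\xi_j$.

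The second step is to bound $\csum_j p_j f(\pi_j)$ from below in three moves. Each $\pi_j$ lies in the leaf $L_{\xi_j}$, so the definition \eqref{eq:35} of $\epsilon$ as the leafwise minimum of $f$ yields $f(\pi_j)\ge\epsilon(\xi_j)$. Since $\epsilon^\Cup$ is the largest convex function not exceeding $\epsilon$, we have $\epsilon^\Cup\le\epsilon$ pointwise and hence $\epsilon(\xi_j)\ge\epsilon^\Cup(\xi_j)$. Finally, convexity of $\epsilon^\Cup$ together with $P\omega=\csum_j p_j\,\xi_j$ gives, by Jensen's inequality, $\csum_j p_j\,\epsilon^\Cup(\xi_j)\ge\epsilon^\Cup(P\omega)$. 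Concatenating these,
\begin{equation*}
  f^\cup(\omega)=\csum_j p_j\,f(\pi_j)\ge\csum_j p_j\,\epsilon(\xi_j)\ge\csum_j p_j\,\epsilon^\Cup(\xi_j)\ge\epsilon^\Cup(P\omega),
\end{equation*}
which is the claimed bound \eqref{eq:37}.

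I do not expect a genuine obstacle, as the argument is essentially a one-line transport of decompositions; the only points needing care are bookkeeping ones. One must check that $\epsilon$ is well defined, i.e.\ that the minimum in \eqref{eq:35} is attained --- this follows from $f$ being continuous and each leaf $L_\xi$ being compact (the intersection of the compact $\Omega$ with the closed preimage $P^{-1}(\xi)$). If one prefers not to rely on Theorem~\ref{thmRoof} for attainment, the same chain of inequalities applies verbatim to any near-optimal decomposition and survives passage to the infimum. The subtlety most worth emphasizing is the bookkeeping of the two opposite comparison directions: $\epsilon^\Cup\le\epsilon$ pointwise while Jensen runs the other way for the convex function $\epsilon^\Cup$, and it is exactly this interplay that produces a lower bound on $f^\cup$ rather than an upper one.
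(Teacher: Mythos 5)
Your proof is correct and follows essentially the same route as the paper's: push an optimal decomposition of $\omega$ forward through the affine map $P$, bound $f(\pi_j)\ge\epsilon(P\pi_j)$ leafwise, and then pass to the convex hull (the paper invokes the variational characterization \eqref{eq:28} of $\epsilon^\Cup$ where you invoke $\epsilon^\Cup\le\epsilon$ plus Jensen, which is the same inequality). Your added remarks on attainment of the minima are sensible bookkeeping but not a substantive difference.
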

\begin{proof}
  Let $\omega=\sum \lambda_i \omega_i$ be optimal for $f^\cup$, so
  $f^\cup(\omega) = \sum \lambda_i f(\omega_i)$. Let $\xi_i=P\omega_i$.
Then, due to linearity of $P$ we have $\xi=\sum\lambda_i\xi_i$ and
due to the definition of $\epsilon(\xi)$ we have  $f(\omega_i) \ge\epsilon(\xi_i) 
$. So,
\begin{equation}
  \label{eq:38}
  f^\cup(\omega) =\sum\lambda_if(\omega_i) \ge \sum \lambda_i \epsilon(\xi_i) 
\end{equation}
and from  eq.~\eqref{eq:28}, we have
\begin{equation}
  \label{eq:39}
  \sum \lambda_i \epsilon(\xi_i) \ge \epsilon^\Cup(\xi).
\end{equation}
\end{proof}
There are some cases where we can find states $\omega$ for which the 
inequality of the lemma can be sharpened to an equality. 

\begin{thm}
  \label{thr:1}
     Let $P=P^2$ be a
  linear and idempotent map of the state space onto itself 
  with an fixed point set $P\Omega=\Xi\subset \Omega$ of $P$-invariant
  states. 
Let $f(\omega) $ be $P$-invariant,i.e., 
  \begin{equation}
    \label{eq:51}
    f( P\omega) = f(\omega)\quad\forall\omega\in \Omega
  \end{equation}
Then
 for all $P$-invariant states $\omega_P\in P\Omega$ holds
  \begin{equation}
    \label{eq:15}
     f^\cup(\omega_P)=\epsilon^\Cup(\omega_P) = \epsilon^\cup(\omega_P) 
  \end{equation}
  and these states 
 have an optimal decomposition completely in $P\Omega$, i.e., into
  $P$-invariant states only. 
Furthermore,  for every state $\omega$ it holds that
  \begin{equation}
    \label{eq:16}
     f^\cup(\omega) \ge  f^\cup(P\omega)
  \end{equation}
\end{thm}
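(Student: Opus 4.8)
The plan is to prove the chain \eqref{eq:15} by sandwiching $f^\cup(\omega_P)$ between $\epsilon^\Cup(\omega_P)$ from below and above, and then to read off \eqref{eq:16} as a quick corollary. The observation that drives everything is that $P$-invariance forces $f$ to be \emph{constant on every leaf}: if $P\omega=\xi$ then $f(\omega)=f(P\omega)=f(\xi)$ by \eqref{eq:51}. Moreover any $\xi\in\Xi=P\Omega$ is a fixed point of the idempotent $P$ (writing $\xi=P\omega_0$ gives $P\xi=P^2\omega_0=P\omega_0=\xi$), so $\xi\in L_\xi$ and therefore $\epsilon(\xi)=\min_{L_\xi}f=f(\xi)$ for every $\xi\in\Xi$. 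I would record this first, since it identifies $\epsilon$ with $f$ on the fixed-point set and hence makes $\epsilon$ concave there (a restriction of the concave $f$ to the convex set $\Xi$). By the remark that for concave functions the minimization in \eqref{eq:28} may be restricted to the extreme boundary, this already gives $\epsilon^\Cup=\epsilon^\cup$ on $\Xi$, which settles the second equality in \eqref{eq:15}.

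For the lower bound I would simply invoke Lemma~\ref{lem:41} at $\omega_P$: since $\omega_P$ is $P$-invariant, $P\omega_P=\omega_P$ and hence $f^\cup(\omega_P)\ge\epsilon^\Cup(\omega_P)$. The matching upper bound is where the work lies. Given any convex decomposition of $\omega_P$ inside the fixed-point set, $\omega_P=\csum_i q_i\eta_i$ with $\eta_i\in\Xi$, I would refine each $\eta_i$, viewed as a state of $\Omega$, into pure states of the full space, $\eta_i=\sum_k r_{ik}\pi_{ik}$, so that $\omega_P=\sum_{i,k}q_i r_{ik}\pi_{ik}$ becomes a genuine pure-state decomposition. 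Concavity of $f$ gives $\sum_k r_{ik}f(\pi_{ik})\le f(\eta_i)=\epsilon(\eta_i)$ for each $i$, whence $f^\cup(\omega_P)\le\sum_{i,k}q_i r_{ik}f(\pi_{ik})\le\sum_i q_i\epsilon(\eta_i)$. Minimizing the right-hand side over all such $\Xi$-decompositions yields $f^\cup(\omega_P)\le\epsilon^\Cup(\omega_P)$, and together with the lower bound this gives $f^\cup(\omega_P)=\epsilon^\Cup(\omega_P)$; combined with $\epsilon^\Cup=\epsilon^\cup$ from the first paragraph, this is exactly \eqref{eq:15}.

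To exhibit an optimal decomposition living entirely in $P\Omega$, I would start from a pure-state decomposition $\omega_P=\csum_j p_j\pi_j$ realizing $f^\cup(\omega_P)$ and apply $P$: by linearity $\omega_P=P\omega_P=\csum_j p_j\,P\pi_j$ is a convex decomposition into $P$-invariant states $P\pi_j\in P\Omega$, and by \eqref{eq:51} its cost is unchanged, $\sum_j p_j f(P\pi_j)=\sum_j p_j f(\pi_j)=f^\cup(\omega_P)$, so it again realizes the convex-roof value while using only $P$-invariant constituents (these may be further split into extreme points of $\Xi$ if desired). Finally, \eqref{eq:16} is immediate: for an arbitrary $\omega$ the state $P\omega$ is $P$-invariant, so the first part gives $f^\cup(P\omega)=\epsilon^\Cup(P\omega)$, while Lemma~\ref{lem:41} gives $f^\cup(\omega)\ge\epsilon^\Cup(P\omega)$; combining the two produces $f^\cup(\omega)\ge f^\cup(P\omega)$.

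I expect the only real obstacle to be the upper bound of the second paragraph: the states $\eta_i\in\Xi$ are in general \emph{not} pure states of $\Omega$, so one cannot compute $f^\cup$ directly from a decomposition inside $\Xi$ but must descend one further level to honest pure states, absorbing the resulting gap through the concavity inequality. The point worth checking carefully is that this refinement costs nothing, and indeed must saturate, since the combined decomposition then realizes the minimum $f^\cup(\omega_P)$. Everything else is bookkeeping resting on the single structural fact that $P$-invariance of $f$ collapses it to a constant on each leaf.
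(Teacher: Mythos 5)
Your proof is correct and follows essentially the same route as the paper: identify $\epsilon$ with the concave $f$ on the fixed-point set $\Xi$ (hence $\epsilon^\Cup=\epsilon^\cup$), obtain the lower bound from Lemma~\ref{lem:41}, and obtain the upper bound by refining a decomposition inside $\Xi$ into pure states of $\Omega$ and using concavity. The only noteworthy variation is that you produce the $P$-invariant optimal decomposition by applying $P$ to an $f^\cup$-optimal pure decomposition, whereas the paper reads it off from the $\epsilon^\cup$-optimal decomposition; both are valid.
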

\begin{proof}
  Since $f$ is $P$-invariant, it is constant on every affine subspace
  $P^{-1}\omega$. Therefore, we have $\epsilon(\omega) = f(\omega)$ on
  $P\Omega$ and $\epsilon$ is concave. So, $\epsilon^\Cup(\omega) =
  \epsilon^\cup(\omega)$. 
Let $\omega={}^c\sum \lambda_j \omega_j$ be optimal for
$\epsilon^\cup$. Then,
\begin{equation}
  \label{eq:19}
  \epsilon^\cup(\omega) = \csum \lambda_j \epsilon(\omega_j) = \csum
  \lambda_j f(\omega_j) \ge f^\cup(\omega).
\end{equation}
Together with Lemma~\ref{lem:41} this proves \eqref{eq:15} and provides an
$f^\cup$-optimal decomposition lying in  $P\Omega$.  
\end{proof}

\begin{exmpl}
 Let $P$ be the projection to real states $P(\omega)=1/2
  (\omega+\omega^\top)$ and $f$ the output entropy of the diagonal map $S^D_\text{out}(\omega)=S(\Phi_D(\omega))$.
Then  real states have optimal decompositions into real
states only. Furthermore, Lemma~\ref{lem:41} asserts that all non-real states
have an entanglement entropy at least as large as their real projections
\begin{equation}
  \label{eq:47}
  E_D(\omega)=f^\cup(\omega)  \ge \epsilon^\cup(P \omega)= E_D(P\omega)\qquad
  \forall \omega\in \Omega
\end{equation}  
\end{exmpl}

A slightly different version was used in \cite{TV00} and 
worked out in \cite{vollbrecht-2000}: 
\begin{thm}[\cite{TV00,vollbrecht-2000}]
  \label{thr:3}
  Let $G$ be a symmetry goup of $f$
such that $f(\omega^g)=f(\omega)$  for all $\omega\in \Omega$ and $g\in G$.  
Let $P_G$ be the twirl map or group average corresponding to $G$, i.e., the 
idempotent projection to the subspace $P_G\Omega$ of 
$G$-invariant states 
  \begin{equation}
    \label{eq:43}
   \omega_P=  P_G\omega =\frac{1}{|G|} \sum_{g\in G} \omega^g
  \end{equation}
Then for all $G$-invariant states $\omega_P\in P_G\Omega$ holds 
\begin{equation}
  \label{eq:23}
  f^\cup(\omega_P) = \epsilon^\Cup(\omega_P).
\end{equation}
Furthermore, for every state $\omega\in \Omega$ it holds that 
\begin{equation}
  \label{eq:40}
  f^\cup(\omega) \ge f^\cup(P\omega)
\end{equation}
 States $\omega_P$ for which $\epsilon^\Cup(\omega_P)=\epsilon(\omega_P)$ 
have an optimal decomposition consisting of one complete orbit of $G$;
otherwise the optimal decomposition consists of several complete orbits. 
\end{thm}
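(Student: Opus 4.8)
The plan is to reduce \eqref{eq:23} to Lemma~\ref{lem:41} plus a single symmetrization step, and to note from the outset that \emph{unlike} in Theorem~\ref{thr:1} we only assume orbitwise invariance $f(\omega^g)=f(\omega)$, not invariance under the twirl $f(P_G\omega)=f(\omega)$; consequently $\epsilon(\xi)=\min_{L_\xi}f$ need not coincide with $f$ on $\Xi=P_G\Omega$ and need not be concave, so the shortcut used there is unavailable and only the hull $\epsilon^\Cup$ appears. The inequality $f^\cup(\omega_P)\ge\epsilon^\Cup(\omega_P)$ is immediate from Lemma~\ref{lem:41} because $P_G\omega_P=\omega_P$, so the substance is the reverse inequality. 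The tool I would isolate first is the \emph{twirl of a decomposition}: given any convex decomposition $\omega=\csum_j p_j\pi_j$, applying each $g\in G$ and averaging yields a decomposition of $P_G\omega$ into complete $G$-orbits, $P_G\omega=\sum_j\sum_g \tfrac{p_j}{|G|}\pi_j^g$, whose cost is unchanged since $f(\pi_j^g)=f(\pi_j)$ for all $g$. This is the one place the group hypothesis enters, and everything else is bookkeeping around it.

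For the reverse inequality I would start from an optimal decomposition $\omega_P=\csum_i\lambda_i\xi_i$ for $\epsilon^\Cup(\omega_P)$ with $\xi_i\in\Xi$ (it exists by compactness and Carath\'eodory). For each $i$ I pick $\hat\omega_i$ in the leaf $L_{\xi_i}$ realizing $f(\hat\omega_i)=\epsilon(\xi_i)$, and an $f^\cup$-optimal pure-state decomposition $\hat\omega_i=\csum_k q_{ik}\pi_{ik}$. Twirling the latter produces a pure-state decomposition of $\xi_i=P_G\hat\omega_i$ into the orbits of the $\pi_{ik}$ whose cost equals $f^\cup(\hat\omega_i)\le f(\hat\omega_i)=\epsilon(\xi_i)$, the inequality being the elementary fact that $f^\cup\le f$ for concave $f$. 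Hence $f^\cup(\xi_i)\le\epsilon(\xi_i)$, and convexity of $f^\cup$ gives
\[
 f^\cup(\omega_P)\le\sum_i\lambda_i f^\cup(\xi_i)\le\sum_i\lambda_i\epsilon(\xi_i)=\epsilon^\Cup(\omega_P),
\]
which together with Lemma~\ref{lem:41} proves \eqref{eq:23}. The monotonicity \eqref{eq:40} then comes for free: by \eqref{eq:23} and Lemma~\ref{lem:41}, $f^\cup(\omega)\ge\epsilon^\Cup(P\omega)=f^\cup(P\omega)$.

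For the orbit statement I would run the twirl in the opposite direction. Symmetrizing an $f^\cup$-optimal pure decomposition of $\omega_P$ (which exists because $f^\cup$ is a roof, Theorem~\ref{thmRoof}) yields an optimal one made of complete orbits; grouping orbit-by-orbit gives $\omega_P=\sum_j p_j\xi_j$ with $\xi_j=P_G\pi_j$. Comparing costs, the chain $\epsilon^\Cup(\omega_P)=\sum_j p_j f(\pi_j)\ge\sum_j p_j\epsilon(\xi_j)\ge\epsilon^\Cup(\omega_P)$ must collapse to equalities, forcing $f(\pi_j)=\epsilon(\xi_j)$ (each generating $\pi_j$ is a \emph{pure} leaf-minimizer) and making $\{(\xi_j,p_j)\}$ optimal for $\epsilon^\Cup(\omega_P)$. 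Thus the number of orbits equals the number of terms of an optimal $\epsilon^\Cup$-decomposition. The ``several orbits'' direction is then clean: if $\epsilon^\Cup(\omega_P)<\epsilon(\omega_P)$, any single orbit has cost $f(\pi)\ge\epsilon(\omega_P)>f^\cup(\omega_P)$ and cannot be optimal, so at least two orbits are needed.

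The step I expect to be the real obstacle is the converse ``single orbit'' assertion when $\epsilon^\Cup(\omega_P)=\epsilon(\omega_P)$: to realize the one-term $\epsilon^\Cup$-decomposition $\omega_P=\omega_P$ by a single orbit I need the leaf-minimizer \emph{at $\omega_P$ itself} to be attained at a pure state, whereas a priori the concave $f$ only attains its minimum over the convex slice $L_{\omega_P}$ at an extreme point of that slice, which could be a mixed state of $\Omega$. I would close this by combining concavity with the extraction argument above: the extreme points that survive in an optimal decomposition are forced to be pure leaf-minimizers, so in the equality case the single orbit of such a pure minimizer already attains $f^\cup(\omega_P)=\epsilon(\omega_P)$. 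This is where genuine care (and, implicitly, the strict-concavity behaviour of the output entropy) is required, and where I would expect the argument to need the most attention.
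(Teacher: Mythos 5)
Your argument for \eqref{eq:23} and \eqref{eq:40} is correct and follows essentially the paper's own route: take an optimal decomposition of $\omega_P$ for $\epsilon^\Cup$, lift each component $\xi_i$ to a minimizer on its leaf, twirl into complete $G$-orbits, and compare costs using $f(\omega^g)=f(\omega)$, with Lemma~\ref{lem:41} supplying the reverse inequality. You are in fact more careful than the paper in one respect: the paper's candidate decomposition \eqref{eq:48} is built from the leaf minimizers $\tilde\omega_i$ themselves, which are extreme points of the convex slices $L_{\xi_i}$ but not necessarily pure states of $\Omega$, so bounding $f^\cup(\omega_P)$ by its cost silently uses $f^\cup\le f$ for concave $f$; your intermediate step of first taking a pure-state decomposition of each $\hat\omega_i$ and twirling that makes this explicit. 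The one point that neither you nor the paper fully closes is the ``one complete orbit'' clause: the pure states $\pi_k$ your extraction produces are minimizers of \emph{their own} leaves $L_{\xi_k}$ with $\xi_k=P_G\pi_k$ in general different from $\omega_P$, so a single such orbit decomposes $\xi_k$ rather than $\omega_P$; the equality $\epsilon^\Cup(\omega_P)=\epsilon(\omega_P)$ alone does not yet furnish one pure state $\pi$ with $P_G\pi=\omega_P$ and $f(\pi)=\epsilon(\omega_P)$ unless the minimum in \eqref{eq:35} over the leaf $L_{\omega_P}$ is attained at a pure state (as it is, by fiat, in the application \eqref{eq:53}, where $\epsilon$ is defined by minimizing over pure states only). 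You correctly flag this; the paper's proof does not address it at all, so this is a defect shared with the original rather than a gap specific to your argument for \eqref{eq:23} and \eqref{eq:40}.
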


\begin{proof}
  We assume that $\omega_P=\sum\lambda_i\omega_i$ is optimal for $
  \epsilon^\Cup(\omega_P)$. Let $\tilde\omega_i$  be  states which achieve the 
minimum in eq.~\eqref{eq:35} for
the $\omega_i$: $\epsilon(\omega_i) =f(\tilde\omega_i)$ and $\tilde\omega_i$ belongs
to the leaf $L_i$. So, $P\tilde\omega_i=\omega_i$ and therefore
\begin{equation}
  \label{eq:48}
  \omega_P=\sum_{i,g} \frac{\lambda_i}{|G|}\; \tilde\omega^g_i
\end{equation}
is a candidate decomposition for $f$. So,
\begin{equation}
  \label{eq:49}
  f^\cup(\omega_P) \le \sum_{i,g} \frac{\lambda_i}{|G|}\;f( \tilde\omega^g_i)
\end{equation}
With $f(\omega^g)=f(\omega)$ and $\sum_g\frac{1}{|G|}=1$, the right hand side
evaluates to
\begin{equation}
  \label{eq:50}
  f^\cup(\omega_P) \le \sum_i \lambda_i f(\tilde\omega_i) =\sum \lambda_i 
\epsilon(\omega_i) =\epsilon^\Cup(\omega)
\end{equation}
This, together with lemma \ref{lem:41}, proves the theorem and shows that 
decomposition \eqref{eq:48} is optimal for $f^\cup$. 
\end{proof}

Please note that the $G$-invariance of $f$ does not 
imply $P_G$-invariance of $f$. This is the main difference 
 between Theorem~\ref{thr:1} and Theorem~\ref{thr:3} for applications.

Only in the case where $f(\omega)$ is $P_G$-invariant $f(P\omega)=f(\omega)$ 
 (which implies
$G$-invariance $f(\omega^g)=f(\omega)$) we know that every $G$-invariant
state has an optimal decomposition consisting solely of $G$-invariant states.

\section{Output entropy of the diagonal map}
\label{output}

The diagonal map $\Phi_D$  maps $\Omega_N$, the state space of an
$N$-dimensional Hilbert space,  to the simplex 
$\Omega^\prime_N = \{x_1,x_2,...,x_N\},$ $ \;0\le x_i\le 1,$ $ \sum x_i=1$.
It corresponds to a complete von Neumann measurement. Its Kraus form is
\begin{equation}
  \label{eq:9}
  \Phi(\omega) = \sum_{i=1}^N \; P_i \omega P_i
\end{equation}
with $P_i=|i\rangle\langle i|$. The output entropy of this channel is
\begin{align}
  \label{eq:5}
  S_\text{out}^D(\omega) &= S(\Phi_D(\omega)) \\
  \label{eq:6}
  &= \sum_{i=1}^N \eta(x_i) 
\end{align}
with the usual abbreviation $\eta(x) = -x\log(x)$ for $x>0$ and $\eta(0)=0$.  
This function is not only concave but a concave roof,
 as was shown in \cite{AUroof}.

The minimal output entropy is zero and the maximal one is $\log N$.  

Things become more refined by restricting the channel onto a face of
$\Omega_N$. As an example we take the $(N-1)$-dimensional subspace
$\mathcal{H}_0$ which is orthogonal to the vector
$|\phi\rangle=N^{-1/2}\sum|j\rangle$.   It consists of vectors $\sum
a_j|j\rangle$ such that $\sum a_j=0$. $\mathcal{H}_0$ supports pure states
satisfying $\Phi_D(\pi)=N^{-1} \eins$ and so the maximal output entropy is 
$\log N$
again.

For the minimal output entropy we have a more complex result:
\begin{thm}\label{thm:1}
  Let $\Omega_0$ be the face of $\Omega_N$ consisting of states whose
  support is orthogonal
  to $\sum_{i=1}^N \vert i\rangle$. Let $S_\text{min}^D$ be the minimal
  output entropy of the diagonal map $\Phi_D$. Then  
we have:
\begin{itemize}
\item We have  $S_\text{min}^D(\Omega_0) =\log 2$ for 
For $N=2,3,\dots6$. This is
   achieved by the $N(N-1)/2$ pure input states $\pi_{jk}, \;j<k$,
  and only by these states. Here,
  $\pi_{jk}=|\phi_{jk}\rangle\langle\phi_{jk}|$ with
  $|\phi_{nm}\rangle=2^{-1/2} \left( |n\rangle-|m\rangle\right)$.
\item 
  For $N>6$ we have 
  \begin{equation}
    \label{eq:22}
    S_\text{min}^D(\Omega_0) = 
\log N -\left(1-\frac{2}{N}\right ) \log (N-1) 
  \end{equation}
 and $\lim_{N\rightarrow\infty}S^D_\text{min} =0.$
 The minimum  is achieved by the $N$ states
  $\pi_j=|\phi_{j}\rangle\langle\phi_{j}|$ where 
  \begin{equation}
    \label{eq:24}
\phi_1=    \left( \,(N-1)a,\,-a,\,-a\,,...,-a\,\right)
\qquad\text{ with }\qquad a= (N(N-1))^{-1/2}   
  \end{equation}
 and the other $\phi_i$ are obtained by permuting the components. 
\end{itemize}
\end{thm}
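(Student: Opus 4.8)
The plan is to reduce the variational problem to a finite-dimensional one and then to a one-parameter family. First I would use that $S_\text{out}^D(\omega)=\sum_i\eta(\langle i|\omega|i\rangle)$ is concave on the compact convex face $\Omega_0$, being a sum of concave functions of the linear entries $x_i=\langle i|\omega|i\rangle$. A concave function on a compact convex set attains its minimum at an extreme point, so $S_\text{min}^D(\Omega_0)$ is attained on a pure state $\pi=|\psi\rangle\langle\psi|$ with $|\psi\rangle=\sum_j a_j|j\rangle\in\mathcal H_0$, i.e. $\sum_j a_j=0$. For such a state $S_\text{out}^D(\pi)=\sum_j\eta(|a_j|^2)$, so with $x_j=|a_j|^2$ the task becomes: minimize $\sum_j\eta(x_j)$ over probability vectors $x$ for which phases closing the polygon $\sum_j\sqrt{x_j}\,e^{i\theta_j}=0$ exist — equivalently, by the elementary polygon-closure criterion, for which $2\sqrt{\max_j x_j}\le\sum_j\sqrt{x_j}$.

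Next I would pin down the structure of a minimizer $x^\ast$. If the closure constraint were slack, the concave objective could be lowered by moving toward a vertex of the supporting simplex face (concentrating the mass), so at the minimum it is tight, $\sqrt{x_1}=\sum_{j\ge2}\sqrt{x_j}$ with $x_1=\max_j x_j$; moreover for support size $\ge3$ this forces a unique maximum, since two coordinates tied at the maximum would leave the constraint strictly slack. The Lagrange conditions for $\sum_j x_j=1$ and $\sqrt{x_1}-\sum_{j\ge2}\sqrt{x_j}=0$ then show that every non-maximal coordinate satisfies one and the same equation $-\log x_j-1=\mu-\nu/(2\sqrt{x_j})$; after the substitution $u=\sqrt{x_j}$ this is strictly convex in $u$ and hence has at most two positive roots. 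Thus the small coordinates take at most two distinct values, and the remaining — and, I expect, hardest — step is to rule out genuinely three-valued stationary points as global minima, by a second-order or direct comparison argument, collapsing every candidate onto the family of one large plus equal small coordinates
\[
x^{(m)}=\Big(\tfrac{m}{m+1};\underbrace{\tfrac{1}{m(m+1)},\dots,\tfrac{1}{m(m+1)}}_{m}\Big),\qquad m=1,\dots,N-1,
\]
where $m=1$ is the edge state $\pi_{jk}$ and $m=N-1$ is the star state $\pi_j$ of the theorem.

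Finally I would evaluate the family. A direct computation gives $S_m:=\sum_j\eta(x^{(m)}_j)=\log(m+1)-\tfrac{m-1}{m+1}\log m$, with $S_1=\log2$ and $S_{N-1}=\log N-(1-2/N)\log(N-1)$ reproducing \eqref{eq:22}, and $S_m\to0$ as $m\to\infty$, yielding the stated limit. Differentiating shows the sign of $dS_m/dm$ is that of $(m+1)-2m\log m$, which is positive for small $m$ and negative for $m\ge3$; since in addition $S_2>S_3$, the sequence rises from $m=1$ to $m=2$ and falls for $m\ge2$, so that $\min_{1\le m\le N-1}S_m=\min(S_1,S_{N-1})$. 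The dichotomy thus reduces to the transcendental comparison of $S_1=\log2$ with $S_{N-1}$: evaluating at the crossover gives $S_5=\log6-\tfrac23\log5>\log2>\log7-\tfrac57\log6=S_6$, so $S_1$ is the minimum precisely for $N\le6$ — attained only on the $\binom N2$ edge states, by the strictness of these inequalities — while $S_{N-1}$ wins for $N\ge7$, which is the announced bifurcation at $N=6$.
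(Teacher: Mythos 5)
Your reduction is genuinely different from the paper's: instead of first restricting to real states via $S^D_\text{out}(\omega)=S^D_\text{out}(\omega^\top)$ and applying Lagrange multipliers to the real amplitudes $a_j$ (which yields the transcendental equation $a\log a^2=\lambda+\mu a$ with its at most three Lambert-$W$ roots), you work with the moduli $x_j=|a_j|^2$ and encode membership in $\mathcal H_0$ by the polygon-closure inequality $2\sqrt{\max_j x_j}\le\sum_j\sqrt{x_j}$. Your activity/unique-maximum argument is an attractive shortcut: it disposes in one stroke of the balanced configurations that the paper must handle separately (the $m$-fold $\pm m^{-1/2}$ states with $m\ge 4$ in its case (a), and the $n{:}m$ two-value states with $n,m\ge2$ via the concavity computation \eqref{eq:32}). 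Your endgame --- the family $x^{(m)}$, the formula $S_m=\log(m+1)-\tfrac{m-1}{m+1}\log m$, the sign of $(m+1)-2m\log m$, and the comparison $S_5>\log2>S_6$ --- is correct and reproduces \eqref{eq:22} and the bifurcation at $N=6$.

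There is, however, a genuine gap, and you have named it yourself: you never actually rule out the genuinely three-valued stationary points (one maximal coordinate together with small coordinates taking two distinct values); you only announce that this ``hardest step'' should follow from ``a second-order or direct comparison argument.'' Without it the classification is incomplete, since such a configuration could a priori undercut both $S_1$ and $S_{N-1}$. This is precisely the step on which the paper's own proof leans hardest: it shows that when all three roots occur, $x_1^2+x_2^2+x_3^2=e^\mu G(z)$ with $G(z)>2$ on $(0,1/e]$ (justified there by a plot of $G'$), which forces $-\mu>\log2$; you would need to supply an analogous quantitative bound in your coordinates. Two smaller points: your claim that the stationarity equation for the small coordinates is ``strictly convex in $u=\sqrt{x_j}$'' fails when the multiplier $\nu$ of the closure constraint is negative --- the conclusion (at most two positive roots) still holds, but because $g'(u)=-2/u-\nu/(2u^2)$ changes sign at most once, so $g$ is monotone or unimodal; argue via that instead. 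And the uniqueness assertion for $N\le6$ (``only by these states'') requires invoking the strict concavity of $\sum_j\eta(x_j)$ explicitly at each reduction step, which you should make precise.
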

The proof of this theorem is found in the appendix.

\section{Entanglement entropy of the $N=3$ diagonal map for some subsets of states}
\label{ent}

\subsection{Geometry of the $N=3$ state space}
The space $\Omega_3$ of positive hermitean $3\times 3$ matrices with unit trace has 8 real
  dimensions. Its boundary consists of zero-dimensional faces (pure states) and
  three-dimensional faces (Bloch balls), the latter corresponding to
  two-dimensional subspaces of the Hilbert space $\mathcal{H}=\C^3$. 

We will use the notion $\psi=(a,b,c)$  to denote a 
one-dimensional subspace $[\C\psi]$ of
$\C^3$ as well as the corresponding point $\pi_\psi=\vert\psi\rangle\langle\psi\vert$ of $\Omega_3$. 
Here, $(a,b,c)$ is a generally
 unnormalized element of this one-dimensional
subspace. 
The set of states orthogonal to a given pure state $\psi$ form a Bloch
  ball which we denote by $B^\perp(\psi)$:
  \begin{equation}
    \label{eq:68}
    B^\perp(\psi) := \{\omega\,|\, \langle\psi|\omega|\psi\rangle=0\}
  \end{equation}
 and all non-trivial faces of $\Omega_3$ are obtained in this way: There is
 a Bloch ball opposite to each pure state and this gives a bijection between
 the 0- and 3-dimensional faces of $\Omega$.

More generally,  we can consider for every pure state $\psi$ 
 the foliation of $\Omega_3$ by parallel hyperplanes  $L_\psi(F)$ defined as 
 \begin{equation}
   \label{eq:69}
   L_\psi(F) := \{\omega\,|\;  \langle\psi|\omega|\psi\rangle=F\} 
 \end{equation}
where $F$ is the fidelity parameter.
The leaves are 7-dimensional in the generic case, but 
the highest leaf $L_\psi(1)=\pi_\psi$ consist of one pure state only and the
lowest leaf $L_\psi(0)=B^\perp(\psi)$ is the Bloch ball opposite to $\psi$. 
Furthermore,
every basis $\{\psi_1,\psi_2,\psi_3\}$ of three orthogonal pure states spans an
  equilateral triangle in $\Omega_3$. Every edge $\overline{\psi_i\psi_j}$ 
of this triangle is the
  diameter of the Bloch ball $B^\perp(\psi_k)$ orthogonal to the opposite
  vertex, see Fig.~\ref{fig:bbal}.
  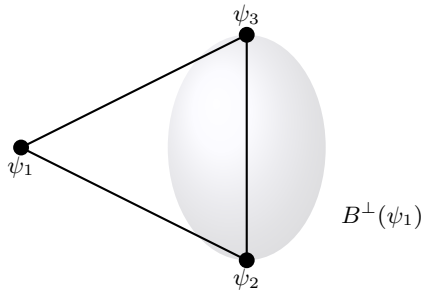
\begin{figure}[h!tb]
    \centering
  \begin{tikzpicture}[scale=1.5]
    \shade[ball color=blue!10!white,opacity=0.20] (0,0) ellipse (.7cm and 1cm);

    \draw[thick] (-2,0) -- (0.,1.) -- (0.,-1.) --(-2,0);
    \fill (-2,0) circle (2pt) node[below] {$\psi_1$};
    \fill (0.,1.0) circle (2pt) node[above] {$\psi_3$};
    \fill (0.,-1.) circle (2pt) node[below] {$\psi_2$};
    \node at (1.2,-.6) {$B^\perp(\psi_1)$};
  \end{tikzpicture}    
    \caption{Bloch ball face and opposite face. Note that the plane of the
      triangle and the ball have only a one-dimensional intersection, the
      diameter $\overline{\psi_2\psi_3}$}
    \label{fig:bbal}
  \end{figure}
All these triangles have the barycenter $\eins/3$ of $\Omega$ in common.

\subsection{States of lowest entanglement entropy}
\label{lowest}
 The triangle spanned by the computational basis 
$
\left(
    \begin{smallmatrix}
      1\\0\\0
    \end{smallmatrix}
\right), 
\left(
    \begin{smallmatrix}
      0\\1\\0
    \end{smallmatrix}
\right), 
\left(
    \begin{smallmatrix}
      0\\0\\1
    \end{smallmatrix}
\right) 
$ is the lowest leaf of the roof
of $E_D=S_D^\cup$, the leaf where $E_D=0$. Especially, $E(\eins/3)=0$. 
This triangle is the fixed point set of the diagonal channel $\Phi_D$.

\subsection{Some rank-2 states}

We can calculate $E_D$ for every Bloch ball which includes one of
the three states of the computational basis. Take, for example, $B(\psi_0,
\psi_{ab})$, the ball spanned by 
$\psi_0=(1,0,0)$ and some orthogonal state $\psi_{ab}=(0,a,b)$. 
This ball is the image of the unitary embedding 
$V\colon 
\left(
    \begin{smallmatrix}
      1\\0
    \end{smallmatrix}
\right)\mapsto
\left(
    \begin{smallmatrix}
      1\\0\\0
    \end{smallmatrix}
\right), 
\;
\left(
    \begin{smallmatrix}
      0\\1
    \end{smallmatrix}
\right) \mapsto
\left(
    \begin{smallmatrix}
      0\\a\\b
    \end{smallmatrix}
\right)
$ of a standard Bloch ball. This embedding can be used to
reduce the calculation of the convex roof to the $N=2$ case,
see chapter 6.2 in \cite{AUroof}. Using the known results 
for the $N=2$ diagonal map (see, e.g., \cite{BNU03}) we find
for states from this Bloch ball, i.e., states of the form 
\begin{equation}
  \label{eq:44}
  \omega=
  \begin{pmatrix}
    1-z& xa & xb \\ x^*a^* & z\, a^*a& z\, a^* b \\
    x^*b^* & z\, ab^* & z\, b^*b
  \end{pmatrix}
\end{equation}
with real $z$ and complex $x,a,b$ the entanglement entropy 
\begin{equation}
  \label{eq:45}
  E_D = \eta\left(\frac{1+\lambda}{2}\right) +
  \eta\left(\frac{1-\lambda}{2}\right) + z\,\eta(a^*a) + z\,\eta(b^*b)
\quad\text{where}\quad \lambda=\sqrt{1-4x^*x}.  
\end{equation}
\subsection{Real permutation invariant states}

The permutation group $G=S_3$ acts on $\Omega_3$ by permuting the
computational basis 
$
\left\{\left(
    \begin{smallmatrix}
      1\\0\\0
    \end{smallmatrix}
\right), 
\left(
    \begin{smallmatrix}
      0\\1\\0
    \end{smallmatrix}
\right), 
\left(
    \begin{smallmatrix}
      0\\0\\1
    \end{smallmatrix}
\right)\right\} 
$. The corresponding twirl acts on normalized pure states $\psi=(a,b,c)$ 
as 
\begin{equation}
  \label{eq:41}
    P_{G}( \pi_\psi) = 
\frac{1}{3}  \begin{pmatrix}
    1&z&z\\z&1&z\\z&z&1 
  \end{pmatrix} \quad\text{where}\quad 2z=ab^*+a^*b+ac^*+a^*c+bc^*+b^*c.
\end{equation}
Let us  denote the $S_3$-invariant state on the right hand side as $\omega(z)$. 
In what follows, 
we restrict our considerations to real states $\omega=\omega^\top$. 
Then the real parameter $z$ can take values in the range
\begin{equation}
  \label{eq:46}
  -\frac{1}{2}\le z\le 1.
\end{equation}
Another often used parametrization for these states uses the fidelity
 with respect to the state 
$\omega(1)=\pi(1,1,1)$. We have 
\begin{equation}
  \label{eq:54}
  F=\langle\psi_{(1,1,1)}\vert\omega\vert\psi_{( 1,1,1)}\rangle = \frac{1}{3}(2z+1).
\end{equation}
The state $\omega(z)$ is of rank three except at the boundaries of the $z$
range:
\begin{itemize}
\item For $z=1$ we have  a pure state 
  \begin{equation}
    \label{eq:57}
\omega(1)=\pi(1,1,1)
  \end{equation}
where we
  use $\pi(a,b,c)$ as shorthand for  
  the pure state $\pi_\psi=\vert\psi\rangle\langle\psi\vert$ with
  $\psi\sim (a,b,c)$. So its entanglement entropy equals its output entropy 
and we have 
\begin{equation}
  \label{eq:59}
  E_D(1) = \log(3).
\end{equation}
\item For $z=-\sfrac{1}{2}$ we have a rank-2 state  
  \begin{equation}
    \label{eq:58}
\omega( -\sfrac{1}{2} ) = \frac{1}{2}\left[
  \pi(-1,0,1)+ \pi(-1,2,-1)\right].      
  \end{equation}
 This state belongs to the face of $\Omega$ considered in
 Section~\ref{output}, so its entanglement entropy can't be smaller than
 $\log(2)$. It is easy to see that this value can indeed be achieved by the optimal
decomposition
\begin{equation}
  \label{eq:61}
  \omega\left(-\sfrac{1}{2}\right)  
 = \frac{1}{3}\left[\pi(1,0,-1) +
  \pi(0,-1,1)+\pi(-1,1,0)\right],
\end{equation}
and therefore 
\begin{equation}
  \label{eq:60}
  E_D\left(-\sfrac{1}{2}\right) = \log(2)
\end{equation}
\item Let's also mention that for $z=0$ we have the maximally mixed state which belongs to the
  lowest leaf of Section~\ref{lowest}. So an optimal decomposition is
  \begin{align}
    \label{eq:67}
    \omega(0) &= \frac{1}{3}\left[\pi(1,0,0) +
  \pi(0,1,0)+\pi(0,0,1)\right], \\
\label{eq:70}
E_D(0) &= 0.
  \end{align}
\end{itemize}

Applying Theorem~\ref{thr:1} using the projection
$P\omega=\frac{1}{2}(\omega+\omega^\top)$ we see that the states $\omega(z)$ have optimal
decompositions including only real states. Furthermore, for an arbitrary
state we have
\begin{equation}
  \label{eq:52}
  E_D(\omega) \ge E_D\left( \frac{\omega+\omega^\top}{2}\right).
\end{equation}
Applying now Theorem~\ref{thr:3} with the projection $P_{G}$  to the space
of real states we learn that 
\begin{equation}
  \label{eq:53}
  E_D(z) =  \epsilon^\Cup(z) \quad\text{where}\quad
  \epsilon(z)=\min_{P_{G}(\pi(a,b,c))=\omega(z)}\; 
S_D\left(\pi(a,b,c)\right),\quad \text{with }z,a,b,c\text{ real}
\end{equation}
The minimization in Eq.~\eqref{eq:53} is one-dimensional  since 
the three real parameters are constrained by $a^2+b^2+c^2=1,$ $ ab+bc+ac=z$. 
A useful parametrization of this constraint is \cite{BNU96} 
\begin{equation}
  \label{eq:56}
  3\,(a,b,c) = (\,\alpha+2\beta \cos \theta,\; \alpha-2\beta\cos(\theta-\pi/3),\;\alpha-2\beta\cos(\theta+\pi/3)\,) 
\quad\text{where}\quad \alpha=\sqrt{2z+1},\; \beta=\sqrt{1-z}
\end{equation}
Numerical search for the minimum in  Eq.~\eqref{eq:53} shows that the
minimum is reached at $\theta=0$ for all $z> -0.4150234$. For smaller values
of $z$, $\theta_\text{min}$ increase up to $\theta_\text{min}=\sfrac{\pi}{6}$ 
at $z= -\sfrac{1}{2}$. A thorough analysis of the function $\epsilon(z)$
obtained by this minimization shows that it is not everywhere convex. 
In the region $z\ge 0$ we re-obtained the result of \cite{BNU03}: the convex
hull $\epsilon^\Cup$  is obtained by replacing $\epsilon(z)$ in the region
$\sfrac{5}{6}<z\le 1$ with a linear piece. 

In the negative-$z$ region our results differ from \cite{BNU03}, who claimed
that  $\epsilon(z)$ is convex there. We find that the convex hull is
obtained by replacing $\epsilon(z)$ in the region between
$z=-\sfrac{1}{2}$ and $z=z^*=-0.4079496711$ with a linear piece, see
Fig.~\ref{fig:eee}.  

\begin{figure}[h!tb]
  \centering
  \begin{tikzpicture}
    \begin{axis}[xlabel=$z$, ylabel=$E_D(z)$, width=14cm,
      height=8cm,xmin=-.5, xmax=1., ymin=-.1, ymax=1.5]
      \addplot gnuplot [id=plot1,domain=-.40794967:.83333,samples=50, no marks,thick,color=blue] {
(-2*(-sqrt(1 - x) + sqrt(1 + 2*x))**2*
      log(((-sqrt(1 - x) + sqrt(1 + 2*x))**2)/9.))/9. - 
   ((2*sqrt(1 - x) + sqrt(1 + 2*x))**2 *
      log(((2*sqrt(1 - x) + sqrt(1 + 2*x))**2) /9.))/9.
      };
      \addplot[color=red,thick,mark=*] coordinates {
        (1.,1.09861)
        (.8333,0.867563)
        };
      \addplot[color=red,thick,mark=*] coordinates {
        (-.5,0.693147)
        (-.40794967,0.470016)
        };

    \end{axis}      

    \begin{scope}[xshift=5cm,yshift=2cm,scale=2.2]
      \draw[white,fill=gray!30] (0,0,0) -- (0,0.2,1) -- (0,.7,0.5) -- cycle;
      \draw[thick] (0,0,0) -- (0,.35,.5);
      \draw[thick] (0,0.2,1) -- (0,.35,.5);
      \draw[thick] (0,.7,.5) -- (0,.35,.5);
       \draw[fill, blue] (0,0,0) circle (.03);
       \draw[fill, blue] (0,0.2,1) circle (.03);
       \draw[fill, blue] (0,0.7,.5) circle (.03);
       \draw[fill, red] (0.,0.35,.5) circle (.04);

    \end{scope}

    \begin{scope}[xshift=10cm,yshift=5cm,scale=1.5]
      \draw[white,fill=gray!30] (0,0,0) -- (0,0,1) -- (0,.7,0.5) -- cycle;
      \draw[thick] (0,0,0) -- (1,.35,.5);
      \draw[thick] (0,0,1) -- (1,.35,.5);
      \draw[thick] (0,.7,.5) -- (1,.35,.5);
      \draw[thick] (1.5,.35,.5) -- (1,.35,.5);
       \draw[fill, blue] (0,0,0) circle (.05);
       \draw[fill, blue] (0,0,1) circle (.05);
       \draw[fill, blue] (0,0.7,.5) circle (.05);
       \draw[fill, blue] (1.5,0.35,.5) circle (.05);
       \draw[fill, red] (1.,0.35,.5) circle (.07);
  \node at (1.3,-.02,0) {{\scriptsize $z=1$ } }; 
  \node at (-.3,-.5,0) {{\scriptsize $z=5/6$ } }; 

    \end{scope}

    \begin{scope}[xshift=3.6cm,yshift=5cm,scale=1.5]
      \draw[white,fill=gray!30] (0,0,0) -- (0,0,1) -- (0,.7,0.5) -- cycle;
      \draw[white,fill=gray!30] (-1.5,0,0.5) -- (-1.5,0.7,0) -- (-1.5,.7,1.) -- cycle;
      \draw[thick] (0,0,0) -- (-1,.35,.5);
      \draw[thick] (0,0,1) -- (-1,.35,.5);
      \draw[thick] (0,.7,.5) -- (-1,.35,.5);
      \draw[thick] (-1.5,.0,.5) -- (-1,.35,.5);
      \draw[thick] (-1.5,.7,.0) -- (-1,.35,.5);
      \draw[thick] (-1.5,.7,1.0) -- (-1,.35,.5);
       \draw[fill, blue] (0,0,0) circle (.05);
       \draw[fill, blue] (0,0,1) circle (.05);
       \draw[fill, blue] (0,0.7,.5) circle (.05);
       \draw[fill, red] (-1.,0.35,.5) circle (.07);
       \draw[fill, blue] (-1.5,0.,.5) circle (.05);
       \draw[fill, blue] (-1.5,0.7,0) circle (.05);
       \draw[fill, blue] (-1.5,0.7,1.0) circle (.05);
  \node at (-1.7,-.4,0) {{\scriptsize $z=-0.5$ } }; 
  \node at (-.2,-.55,0) {{\scriptsize $z=z^*$ } }; 
    \end{scope}

  \end{tikzpicture}
  \caption{Entanglement entropy and optimal decompositions. Regions where
    the convex hull construction leads to a linear behaviour are drawn in
    red. The drawings indicate the shape of optimal decompositions of the
    red state into pure (blue) states in the
    three regions $z< z^*,\; z^*\le z< \sfrac{5}{6},\; \sfrac{5}{6}< z$, resp.  Here,  gray surfaces
    indicate hyperplanes of states which project under $P_G$  to 
   the same value of  $z$. The corners of the triangles form a complete
   orbit under the permutation group $G=S_3.$}
  \label{fig:eee}
\end{figure}
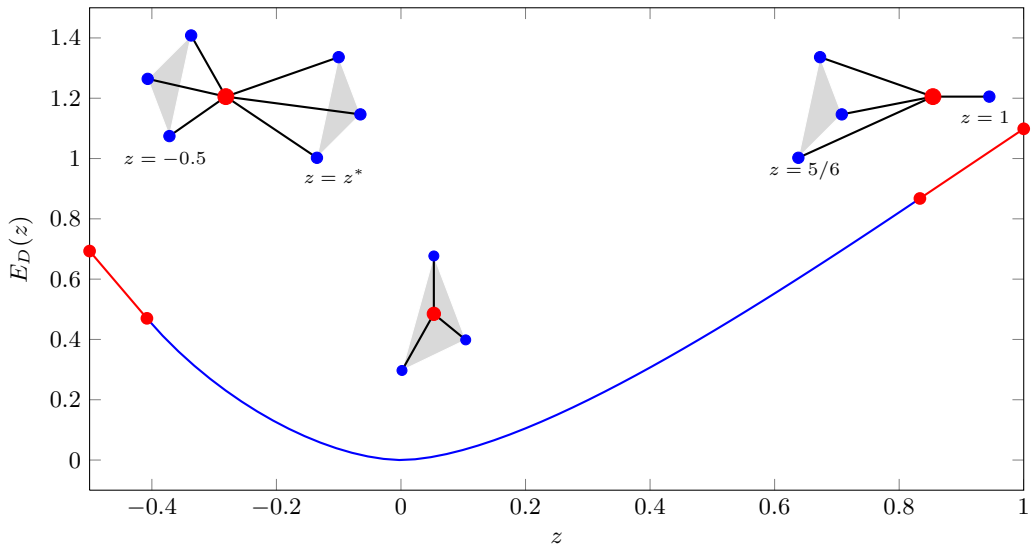

Interestingly, everywhere in the region $z^*\le z\le \sfrac{5}{6}$ where
$\epsilon(z)=\epsilon^\Cup(z)$ the minimum is obtained for states 
with $\theta=0$. So the optimal decompositions in this region have the form 
\begin{equation}
  \label{eq:72}
  \omega = \frac{1}{3}\left[
\pi(a,b,b)+\pi(b,b,a)+\pi(b,a,b)\right],
\end{equation}
corresponding to a short orbit of $S_3$ of length 3 only. In the region 
$-\sfrac{1}{2} <z<z^*$ the optimal decomposition has length 6 and is a
mixture of two such short orbits resulting in a large region in state space
where the entanglement entropy is an affine function.

With 
\begin{equation}
  \label{eq:66}
\mathcal{S}(z) = 2 \eta\left(\frac{(\alpha-\beta)^2}{9}\right)
   + \eta\left(\frac{(\alpha+2\beta)^2}{9}\right)
\quad\text{where}\quad \alpha=\sqrt{2z+1},\; \beta=\sqrt{1-z}
\end{equation}
and $z^*=-0.4079496711$, $\mathcal{S}(\sfrac{5}{6}) =
\log(3)-\sfrac{1}{3}\log(2)$ the final result for the entanglement entropy
is therefore
\begin{equation}
  \label{eq:71}
  E_D(z) =
  \begin{cases}
    p \log(2) + (1-p) \mathcal{S}(z^*) &\text{for } -\sfrac{1}{2}\le z< z^*
\text{ with }  p = \frac{z^*-z}{z^*+\sfrac{1}{2}}
 \\
 \quad   \mathcal{S}(z) & \text{for } z^*\le z\le \sfrac{5}{6}\\
    p [\log(3)-\sfrac{1}{3}\log(2)] +(1-p) \log(3) &
\text{for } \sfrac{5}{6}<z\le 1 \text{ with }  p = \frac{1-z}{1-\sfrac{5}{6}}
  \end{cases}
\end{equation}


\begin{acknowledgments}
  I would like to thank Armin Uhlmann for encouragement and many useful
 explanations and  discussions. 
\end{acknowledgments}

\appendix
\section{Proof of theorem \ref{thm:1} }
\begin{enumerate}
  \item We have $S^D_\text{out}(\omega)=S^D_\text{out}(\omega^\top) =
    S^D_\text{out}(\frac{1}{2}(\omega+\omega^\top))$. Therefore we can
    restrict our search for the minimum to the subspace $\Omega_\R
    =\{\omega\;\vert\;\omega=\omega^\top\}$ of real states.
  \item 
The minimal output entropy is attained by  pure  states  since
    $S_\text{out}^D$ 
    is concave and $\Omega_\R$ is convex. 
  \item The case $N=2$ is trivial. There is only one pure real 
state $\phi_{12}=\phi_1
    = 2^{-1/2} (1,-1)$ in $\mathcal{H}_0$ 
with $S_\text{out}^D=\log 2$. So we now assume $N\ge 3$. 
  \item The pure real states in $\mathcal{H}_0$ have the form
    $\pi=|\psi\rangle\langle \psi|$ with
    \begin{equation}
      \label{eq:11}
      |\psi\rangle =\sum a_j|j\rangle, \qquad \sum a_j=0,\qquad \sum a_j^2=1
\qquad\text{which implies}\qquad a_i\in (-1,1).
    \end{equation}
    So we use Lagrange's multiplier method to find the minimum of
    \begin{equation}
      \label{eq:12}
       S_\text{out}^D(\pi) = -\sum a_j^2\log a_j^2
    \end{equation}
for all $a_i$ satisfying eq.~\eqref{eq:11}. 
The equations to solve read 
\begin{align}
  \label{eq:25}
  a_i \log a_i^2 + a_i &= \lambda+ a_i \mu' \\
  \label{eq:26}
\text{ or: }\quad\qquad a_i \log a_i^2 &= \lambda + a_i \mu\qquad 
\end{align}
where $\lambda,\mu'$ denote Lagrange multipliers for the constraints 
Eq.~\eqref{eq:11}
and $\mu=\mu'-1$. 
Multiplying eq.~\eqref{eq:26} by $a_j$ and summing over $j$ yields
\begin{equation}
  \label{eq:27}
  S_\text{out}^D=-\mu \qquad\text{and therefore,}\qquad \mu<0.
\end{equation}
Not all solutions of eq.~\eqref{eq:26} have minimal output entropy but all
states of minimal entropy must be solutions of eq.~\eqref{eq:26}. So can
find the minimum by classifying all solutions and comparing their entropy. 
Let us consider different cases:
\begin{enumerate}
\item $\lambda=0$, so the solutions of eq.~\eqref{eq:26} are $a_i\in\{0,\pm
  \exp(\mu/2)\}$. Let $m$ instances of the $a_i$ be nonzero. 
Then their modulus must
  be $m^{-1/2}$ for $\sum a_j^2=1$ and $S_\text{out}^D=-\mu=\log m$. 
 Since $m$ must be even for $\sum
  a_j=0$, the minimum value for $S_D$ is achieved for $m=2$. So one
  candidate for the minimum of $S_\text{out}^D$ is 
  \begin{equation}
    \label{eq:29}
    S_\text{out}^D=\log 2,\qquad 
|\phi_{jk}\rangle=2^{-1/2} \left( |j\rangle-|k\rangle\right)
  \end{equation}
\item $\lambda\ne 0$. Then all the $a_i$ have to be non-zero.
The transcendental equation $\lambda+ x\mu =x \log x^2$ 
can be rewritten as
\begin{equation}
  \label{eq:62}
  \frac{\lambda}{2} e^{-\frac{\mu}{2}} = \frac{\lambda}{2|x|}
  e^{\frac{\lambda}{2x}} = \pm we^w \quad\text{with}\quad w= \lambda/2x 
\end{equation}
The inverse of the function $f(w)= w e^w$ is  the Lambert $W$ function
$W(z)$\cite{lambert1}, defined via 
\begin{equation}
  \label{eq:7}
  z = W(x)\; e^{W(z)}.
\end{equation}
As an inverse of a non-injective function it has multiple branches, two of
which are real and  denoted as $W_{-1}$ and $W_0$. 
It follows that 
 $\lambda+\mu x= x \log x^2$  has no more than three  real 
solutions which can be
 expressed as 
\begin{align}
  \label{eq:63}
  x_1 &= \frac{\lambda}{2 W_0(z)} \quad\text{where}\quad
  z=\frac{|\lambda|}{2}e^{-\mu/2} \\
  \label{eq:64}
  x_2 &= \frac{\lambda}{2 W_0(-z)} \\
  \label{eq:65}
  x_3 &= \frac{\lambda}{2 W_{-1}(-z)}
\end{align}
Since $\lambda\ne0$ we have $z>0$. Then a solution $x_1$ does always exist
and 
the solutions $x_2$ and $x_3$ exist only if $z\le 1/e$. They are equal for
$z=1/e$.  

\begin{enumerate}
\item Let us assume that only two of the values, say $x_1$ and $x_2$ are
  used in the state. So we have 
  \begin{equation}
    \label{eq:30}
    n x_1+m x_2=0,\quad nx_1^2+m x_2^2=1, \quad n+m=N,\quad n,m\ge 1
  \end{equation}
resulting in $x_1^2=\frac{m}{nN},\;\;x_2^2=\frac{n}{mN}$ and so
\begin{equation}
  \label{eq:31}
  S(N,n) = \log N -\left(1-\frac{2n}{N}\right)\log\left(\frac{N}{n}-1 \right)
\end{equation}
Now this expression is concave in $n$: 
\begin{equation}
  \label{eq:32}
  \frac{\partial^2 S(N,m)}{\partial n^2}= - \frac{N^2}{n^2(N-n)^2} < 0
  \quad\text{for}\quad 0< n< N 
\end{equation}
and therefore takes for fixed $N$  its minimum at the edges of the allowed
$n$-range, $n=1$ or, equivalently, $n=N-1$. 

So 
 the second  possibility for a minimum of $S_\text{out}^D$ is 
\begin{equation}
  \label{eq:33}
  S_\text{out}^D = \log N -\left(1-\frac{2}{N}\right)\log\left(N-1 \right),
\qquad
\{a_j\} =(a,a,...,a,(1-N)a)
\end{equation}
\item The last possibility is that all three roots $x_1,x_2,x_3$ occur 
among the $a_i$. 
Consider the function 
\begin{equation}
  \label{eq:34}
  F(\lambda,\mu) = x_1^2+x_2^2+x_3^2 
\end{equation}
where the $x_i$  are the    three solutions  of $\lambda +\mu x=x\log x^2$.
Using Eqs.~(\ref{eq:7},\ref{eq:63},\ref{eq:64},\ref{eq:65}) 
we find
\begin{align}
  \label{eq:8}
  F(\lambda,\mu) &=  e^\mu\left( e^{2W_0(z)} + e^{2W_0(-z)} + e^{2W_{-1}(-z)}
\right) \\
\label{eq:10}
 &= e^\mu G(\lambda,\mu) \\
 \label{eq:13}
 \text{where}\qquad G(\lambda,\mu) &= e^{2W_0(z)} + e^{2W_0(-z)} +
 e^{2W_{-1}(-z)} 
\end{align}
\begin{figure}[h!tb]
  \centering
  \includegraphics[scale=.5]{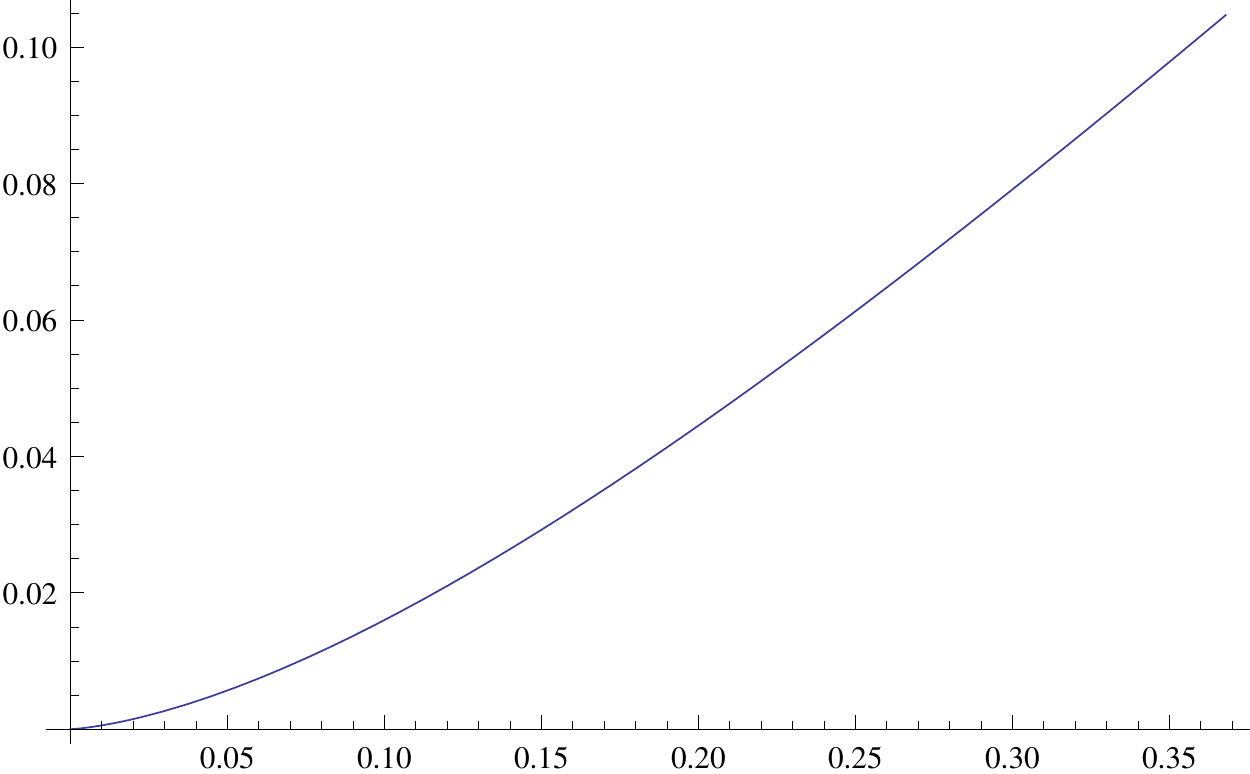}
  \caption{Plot of $G'(z)$ over $z$}
  \label{fig:gp}
\end{figure}
Now we have $G(0)=2$ and $G'(z) > 0 $ for $z\in (0,1/e]$ (see
Fig.~\ref{fig:gp}), therefore $G(z)> 2$ for $z\in (0,1/e]$.  
 Since we need $F\le1 $ for a normalized state vector, this implies
$e^\mu<\frac{1}{2}$, $\mu<-\log 2$, $ S^D_\text{out}=-\mu> \log 2$ 
and therefore any such solution has larger output entropy than the state 
given by Eq.~\eqref{eq:29}.
\end{enumerate}
\end{enumerate}

\item 
The only thing left to do is to compare the two candidates for a minimum,  
Eqs.~\eqref{eq:33} and \eqref{eq:29}. It is easy to see, that 
candidate \eqref{eq:29}
\begin{figure}[h!tb]
  \centering
  \includegraphics[scale=.6]{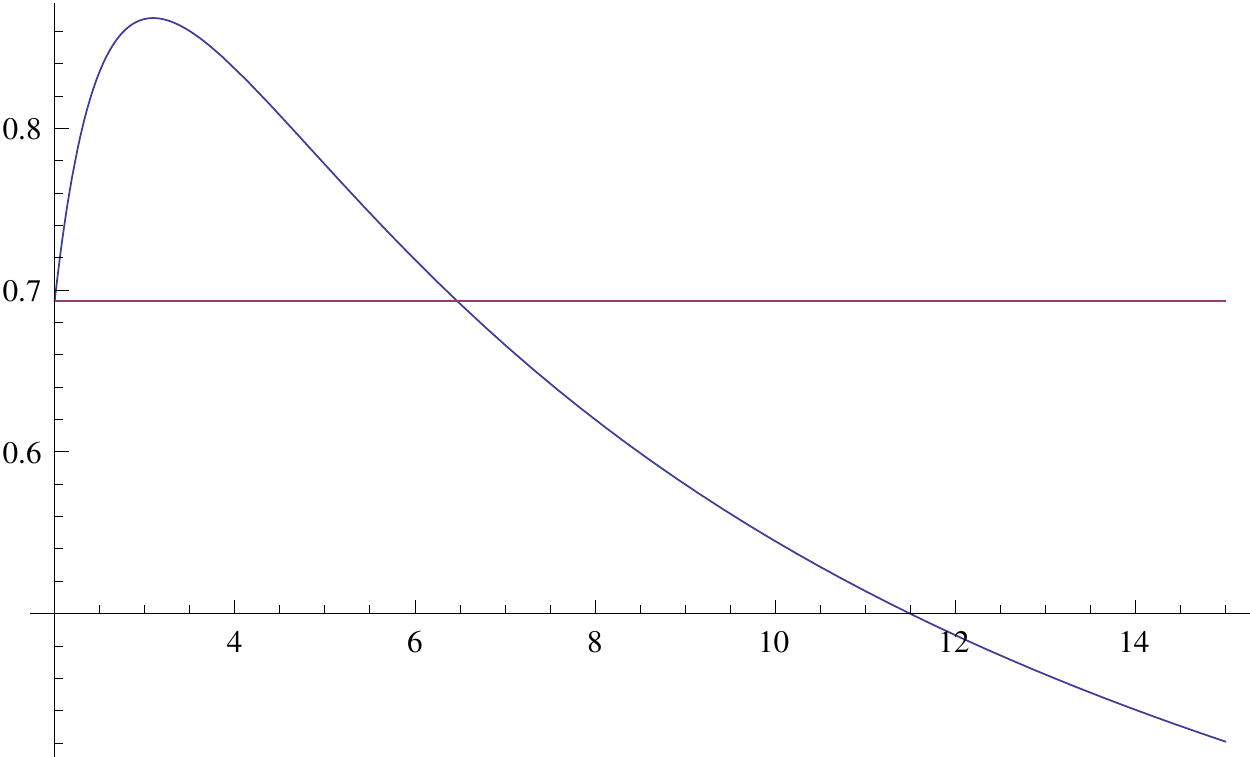}  
  \caption{The output entropies $\log 2$ and Eq.~\eqref{eq:33} plotted over $N$}
  \label{fig:outent}
\end{figure}
wins for $N\le 6$ and candidate \eqref{eq:33} 
wins for all $N>6$, see Fig.~\ref{fig:outent}. 
\hfill$\square$

\end{enumerate}

\bibliographystyle{apsrev}
\bibliography{dach}

\begin{thebibliography}{15}
\expandafter\ifx\csname natexlab\endcsname\relax\def\natexlab#1{#1}\fi
\expandafter\ifx\csname bibnamefont\endcsname\relax
  \def\bibnamefont#1{#1}\fi
\expandafter\ifx\csname bibfnamefont\endcsname\relax
  \def\bibfnamefont#1{#1}\fi
\expandafter\ifx\csname citenamefont\endcsname\relax
  \def\citenamefont#1{#1}\fi
\expandafter\ifx\csname url\endcsname\relax
  \def\url#1{\texttt{#1}}\fi
\expandafter\ifx\csname urlprefix\endcsname\relax\def\urlprefix{URL }\fi
\providecommand{\bibinfo}[2]{#2}
\providecommand{\eprint}[2][]{\url{#2}}

\bibitem[{\citenamefont{Bennett et~al.}(1996)\citenamefont{Bennett, DiVincenzo,
  Smolin, and Wootters}}]{BenFucSmo96}
\bibinfo{author}{\bibfnamefont{C.~H.} \bibnamefont{Bennett}},
  \bibinfo{author}{\bibfnamefont{D.~P.} \bibnamefont{DiVincenzo}},
  \bibinfo{author}{\bibfnamefont{J.~A.} \bibnamefont{Smolin}},
  \bibnamefont{and} \bibinfo{author}{\bibfnamefont{W.~K.}
  \bibnamefont{Wootters}}, \bibinfo{journal}{Physical Review A}
  \textbf{\bibinfo{volume}{54}}, \bibinfo{pages}{3824} (\bibinfo{year}{1996}),
  \eprint{quant-ph/9604024}.

\bibitem[{\citenamefont{Schumacher and Westmoreland}(1997)}]{schumacher97}
\bibinfo{author}{\bibfnamefont{B.}~\bibnamefont{Schumacher}} \bibnamefont{and}
  \bibinfo{author}{\bibfnamefont{M.~D.} \bibnamefont{Westmoreland}},
  \bibinfo{journal}{Phys. Rev. A} \textbf{\bibinfo{volume}{56}},
  \bibinfo{pages}{131} (\bibinfo{year}{1997}).

\bibitem[{\citenamefont{Holevo}(1998)}]{holevo98}
\bibinfo{author}{\bibfnamefont{A.~S.} \bibnamefont{Holevo}},
  \bibinfo{journal}{IEEE Transactions on Information Theory}
  \textbf{\bibinfo{volume}{44}}, \bibinfo{pages}{269} (\bibinfo{year}{1998}),
  \bibinfo{note}{quant-ph/9611023}.

\bibitem[{\citenamefont{Benatti et~al.}(1996)\citenamefont{Benatti, Narnhofer,
  and Uhlmann}}]{BNU96}
\bibinfo{author}{\bibfnamefont{F.}~\bibnamefont{Benatti}},
  \bibinfo{author}{\bibfnamefont{H.}~\bibnamefont{Narnhofer}},
  \bibnamefont{and} \bibinfo{author}{\bibfnamefont{A.}~\bibnamefont{Uhlmann}},
  \bibinfo{journal}{Rep. Math. Phys} \textbf{\bibinfo{volume}{38}},
  \bibinfo{pages}{123} (\bibinfo{year}{1996}).

\bibitem[{\citenamefont{Connes et~al.}(1987)\citenamefont{Connes, Narnhofer,
  and Thirring}}]{CNT87}
\bibinfo{author}{\bibfnamefont{A.}~\bibnamefont{Connes}},
  \bibinfo{author}{\bibfnamefont{H.}~\bibnamefont{Narnhofer}},
  \bibnamefont{and} \bibinfo{author}{\bibfnamefont{W.}~\bibnamefont{Thirring}},
  \bibinfo{journal}{Commun. in Math. Phys.} \textbf{\bibinfo{volume}{112}},
  \bibinfo{pages}{691} (\bibinfo{year}{1987}).

\bibitem[{\citenamefont{Benatti}(2009)}]{Benattibuch}
\bibinfo{author}{\bibfnamefont{F.}~\bibnamefont{Benatti}},
  \emph{\bibinfo{title}{Dynamics, Information and Complexity in Quantum
  Systems}} (\bibinfo{publisher}{Springer}, \bibinfo{year}{2009}).

\bibitem[{\citenamefont{Terhal and Vollbrecht}(2000)}]{TV00}
\bibinfo{author}{\bibfnamefont{B.~M.} \bibnamefont{Terhal}} \bibnamefont{and}
  \bibinfo{author}{\bibfnamefont{K.~G.~H.} \bibnamefont{Vollbrecht}},
  \bibinfo{journal}{Phys. Rev. Lett.} \textbf{\bibinfo{volume}{85}},
  \bibinfo{pages}{2625} (\bibinfo{year}{2000}).

\bibitem[{\citenamefont{Vollbrecht and Werner}(2001)}]{vollbrecht-2000}
\bibinfo{author}{\bibfnamefont{K.~G.~H.} \bibnamefont{Vollbrecht}}
  \bibnamefont{and} \bibinfo{author}{\bibfnamefont{R.~F.}
  \bibnamefont{Werner}}, \bibinfo{journal}{Phys. Rev. A}
  \textbf{\bibinfo{volume}{64}}, \bibinfo{pages}{062307}
  (\bibinfo{year}{2001}), \eprint{quant-ph/0010095}.

\bibitem[{\citenamefont{Manne and Caves}(2008)}]{caves08}
\bibinfo{author}{\bibfnamefont{K.~K.} \bibnamefont{Manne}} \bibnamefont{and}
  \bibinfo{author}{\bibfnamefont{C.~M.} \bibnamefont{Caves}},
  \bibinfo{journal}{Quantum Information {\&} Computation}
  \textbf{\bibinfo{volume}{8}}, \bibinfo{pages}{295} (\bibinfo{year}{2008}).

\bibitem[{\citenamefont{Wootters}(1998)}]{Woo97}
\bibinfo{author}{\bibfnamefont{W.~K.} \bibnamefont{Wootters}},
  \bibinfo{journal}{Phys. Rev. Lett.} \textbf{\bibinfo{volume}{80}},
  \bibinfo{pages}{2245} (\bibinfo{year}{1998}), \eprint{quant-ph/9709029}.

\bibitem[{\citenamefont{Benatti et~al.}(1999)\citenamefont{Benatti, Narnhofer,
  and Uhlmann}}]{BUN99}
\bibinfo{author}{\bibfnamefont{F.}~\bibnamefont{Benatti}},
  \bibinfo{author}{\bibfnamefont{H.}~\bibnamefont{Narnhofer}},
  \bibnamefont{and} \bibinfo{author}{\bibfnamefont{A.}~\bibnamefont{Uhlmann}},
  \bibinfo{journal}{Lett. Math. Phys.} \textbf{\bibinfo{volume}{47}}
  (\bibinfo{year}{1999}).

\bibitem[{\citenamefont{Benatti et~al.}(2003)\citenamefont{Benatti, Narnhofer,
  and Uhlmann}}]{BNU03}
\bibinfo{author}{\bibfnamefont{F.}~\bibnamefont{Benatti}},
  \bibinfo{author}{\bibfnamefont{H.}~\bibnamefont{Narnhofer}},
  \bibnamefont{and} \bibinfo{author}{\bibfnamefont{A.}~\bibnamefont{Uhlmann}},
  \bibinfo{journal}{J. Math. Phys.} \textbf{\bibinfo{volume}{44}},
  \bibinfo{pages}{2402} (\bibinfo{year}{2003}).

\bibitem[{\citenamefont{Uhlmann}(1998)}]{Uh98}
\bibinfo{author}{\bibfnamefont{A.}~\bibnamefont{Uhlmann}},
  \bibinfo{journal}{Open Sys. Information Dyn.} \textbf{\bibinfo{volume}{5}},
  \bibinfo{pages}{209} (\bibinfo{year}{1998}), \eprint{quant-ph/9701014}.

\bibitem[{\citenamefont{Uhlmann}(2010)}]{AUroof}
\bibinfo{author}{\bibfnamefont{A.}~\bibnamefont{Uhlmann}},
  \bibinfo{journal}{Entropy} \textbf{\bibinfo{volume}{12}},
  \bibinfo{pages}{1799} (\bibinfo{year}{2010}).

\bibitem[{\citenamefont{Corless et~al.}(1996)\citenamefont{Corless, Gonnet,
  Hare, Jeffrey, and Knuth}}]{lambert1}
\bibinfo{author}{\bibfnamefont{R.}~\bibnamefont{Corless}},
  \bibinfo{author}{\bibfnamefont{G.}~\bibnamefont{Gonnet}},
  \bibinfo{author}{\bibfnamefont{D.}~\bibnamefont{Hare}},
  \bibinfo{author}{\bibfnamefont{D.}~\bibnamefont{Jeffrey}}, \bibnamefont{and}
  \bibinfo{author}{\bibfnamefont{D.}~\bibnamefont{Knuth}},
  \bibinfo{journal}{Adv. in Computational Math.} \textbf{\bibinfo{volume}{5}},
  \bibinfo{pages}{329} (\bibinfo{year}{1996}).

\end{thebibliography}

\end{document}